\documentclass[journal,twoside]{IEEEtran}

\usepackage[utf8]{inputenc}
\usepackage[T1]{fontenc}
\usepackage{hyperref}
\usepackage{url}
\usepackage{booktabs}
\usepackage{amsfonts}
\usepackage{nicefrac}
\usepackage{microtype}
\usepackage{graphicx}
\graphicspath{{media/}}
\usepackage[table,dvipsnames]{xcolor}
\usepackage{multirow}
\usepackage{amsmath,amssymb}
\usepackage{amsthm}
\usepackage{mathrsfs}
\usepackage{textcomp}
\usepackage{algorithm}
\usepackage{algorithmicx}
\usepackage{algpseudocode}
\usepackage{braket}
\usepackage{listings}
\usepackage{mathtools}
\usepackage[caption=false]{subfig}

\newtheorem{theorem}{Theorem}
\newtheorem{definition}{Definition}  

\title{BBQ-mIS: a parallel quantum algorithm for graph coloring problems}

\author{\IEEEauthorblockN{Chiara Vercellino\IEEEauthorrefmark{1}\IEEEauthorrefmark{2}\IEEEauthorrefmark{3},
Giacomo Vitali\IEEEauthorrefmark{1}\IEEEauthorrefmark{2},
Paolo Viviani\IEEEauthorrefmark{1},
Edoardo Giusto\IEEEauthorrefmark{2},\\
Alberto Scionti\IEEEauthorrefmark{1},
Andrea Scarabosio\IEEEauthorrefmark{1},
Olivier Terzo\IEEEauthorrefmark{1},
Bartolomeo Montrucchio\IEEEauthorrefmark{2}}\\
\IEEEauthorrefmark{1}\textit{LINKS Foundation}, Torino, Italy \\
\IEEEauthorrefmark{2}\textit{\textit{DAUIN}, Politecnico di Torino}, Torino, Italy\\
\IEEEauthorrefmark{3}\textit{chiara.vercellino@linksfoundation.com}
}

\begin{document}
\maketitle

\begin{abstract}
Among the limitations of current quantum machines, the qubits count represents one of the most critical challenges for porting reasonably large computational problems, such as those coming from real-world applications, to the scale of the quantum hardware.
In this regard, one possibility is to decompose the problems at hand and exploit parallelism over multiple size-limited quantum resources. To this purpose, we designed a hybrid quantum-classical algorithm, \textit{i.e.}, \textbf{BBQ-mIS}, to solve graph coloring problems on Rydberg atoms quantum machines. 
The \textbf{BBQ-mIS} algorithm combines the natural representation of \textit{Maximum Independent Set} (\textit{MIS}) problems onto the machine Hamiltonian with a \textit{Branch\&Bound} (BB) approach to identify a proper graph coloring.
In the proposed solution, the graph representation emerges from qubit interactions (qubits represent vertexes of the graph), and the coloring is then retrieved by iteratively assigning one color to a maximal set of independent vertexes of the graph, still minimizing the number of colors with the \textit{Branch\&Bound} approach.
We emulated real quantum hardware onto an IBM Power9-based cluster, with
32 cores/node and 256 GB/node, and exploited an MPI-enhanced library to implement the parallelism for the \textbf{BBQ-mIS} algorithm. Considering this use case, we also identify some technical requirements and challenges for an effective HPC-QC integration. 
The results show that our problem decomposition is effective in terms of graph coloring solutions quality, and provide a reference for applying this methodology to other quantum technologies or applications.

\end{abstract}

\begin{IEEEkeywords}
hybrid quantum-classical optimization, graph coloring, HPC-QC integration, Branch\&Bound
\end{IEEEkeywords}

\section{Introduction}\label{sec:intro}

Recent years have seen a steep rise in interest concerning quantum computers and their integration with High Performance Computing (HPC). For this purpose, it is important to identify both infrastructures to make the classical and quantum machines communicate and applications that can benefit from such an integrated system.



Concerning this last aspect, we propose a hybrid-quantum classical approach to solve graph coloring (GC) problems that emerge from several industrial applications, \textit{e.g.}, optimal deployment of communication networks \cite{barillaro2023comparison} and analysis of biological networks \cite{khor2010application}. The key idea in our algorithm, \textbf{BBQ-mIS}, is to combine a classical Branch\&Bound algorithm, which by definition is highly parallelizable, with a quantum routine to sample feasible (partial-)coloring solutions.

Among the various quantum technologies, neutral atoms machines~\cite{grimm2000optical} are extremely suitable for approaching graph combinatorial optimization problems, such as GC.
They employ Rydberg atoms, \textit{i.e.} neutral atoms, to act as \emph{qubits}, organizing them on a 2D/3D register. Then, the interactions among these atoms, subject to laser pulses \cite{pasqal_optim}, generate the machine Hamiltonian $H$ (\textit{e.g.}, Ising)

\begin{equation}\label{hamiltonian_eq}
    H= \sum_{i=1}^{n} \frac{\hslash \Omega}{2} \sigma_{i}^{x} - \sum_{i=1}^{n} \frac{\hslash \delta}{2} \sigma_{i}^{z} + \sum_{j>i} \frac{C_6}{d_{ij}^{6}} n_i n_j
\end{equation}
where, $n_i = (1+\sigma_i^z)/2$ is the Rydberg state occupancy, $\sigma^{x,z}_i$ are the $i$-th qubit's Pauli matrices, $\Omega$ and $\delta$ are respectively the Rabi frequency and the detuning of the controlling laser system, and $d_{ij}$ is the Euclidean distance between qubits $i$ and $j$ in the register.
Hence, the way the machine Hamiltonian evolves depends on the laser pulses parameters and on the positions of qubits (neutral atoms), which assume, once measured, one out of two possible quantum states (\textit{i.e.}, the excited Rydberg state $\vert 1 \rangle$ or the ground state $\vert 0 \rangle$). From this derives the connection with binary optimization variables.


There is also another effect of quantum mechanics that allows the representation of graphs directly on the quantum machine register, \textit{i.e.}, the \textit{blockade} effect.
It acts as a threshold on the qubit-to-qubit distance, discriminating the closer ones from the farther ones~\cite{ciampini2015ultracold}.
This threshold distance is the critical one at which the strength of the qubit-to-qubit interactions balances with the Rabi frequency of the laser pulses~\cite{picken2018entanglement}, and it is called \textit{blockade radius} $r_b = (\frac{C_6}{\hslash\Omega})^{1/6}$.
In this way, the interactions between qubits, which can be assimilated to the vertexes of a graph, can induce a Unit-Disk Graph (UDG)~\cite{CLARK1990165}: qubit positions in the quantum register are UDG's vertex positions, and edges in the UDG link two vertexes whenever their Euclidean distance is shorter than the blockade radius.

Moreover, the blockade effect prevents qubits, which fall within the blockade radius, from being both in the excited state $|1\rangle$. Thus, retrieving the ground state of $H$ coincides with computing the largest set of non-interacting qubits on the register, \textit{i.e.}, the \textit{Maximum Independent Set} (\textit{MIS}) of the corresponding UDG.




However, not all the combinatorial optimization problems can be easily reformulated as \textit{MIS} problems on UDGs, or, when this is the case \cite{embedding_quera}, the problem mapping onto the quantum hardware scales badly in the number of physical qubits to represent the binary variables. Willing to solve the problem of interest and, at the same time, to reduce as much as possible the needed qubits, we designed the \textbf{BBQ-mIS} algorithm for GC problems.

More in detail, starting from a UDG representation of the target graph thanks to the methodology presented in \cite{embedding_links}, the proposed quantum-enhanced heuristic solves the graph coloring problem by iteratively solving \textit{MIS} problems. One could notice that being able to retrieve a UDG from the graph of interest might limit the generality of the approach. However, it is still possible to map one logical qubit into multiple physical qubits to increase the UDG's connectivity \cite{embedding_quera} or rely on an approximated UDG (an approximation of the original graph with some missing edges). In fact, our approach does not need to find the optimal \textit{MIS} to solve GC problems.

\textbf{BBQ-mIS} successfully solves graph coloring problems, for all the graph samples of our dataset, with the minimum number of colors. The benchmark solutions are provided by the exact solver \textit{Gurobi} \cite{bixby2007gurobi}.


\section{Related Work}

Hybrid quantum-classical algorithms keep on emerging in several research areas. They range from the optimization field, \textit{e.g.}, for dynamic portfolio optimization with minimal holding period exploiting \textit{D-Wave 2000Q} processor \cite{mugel2021hybrid}, or, in a sort of complementary direction, the design of optimization methods for parameterized quantum circuits \cite{nakanishi2020sequential, graham2022multi}, to machine learning, \textit{e.g.}, deep learning time series for ship motion forecasting \cite{li2022ship}, to material simulation using variational quantum algorithms to determine the ground state of molecular problems by combining classical and quantum (neutral atom machines) hardware \cite{deKeijzer2023pulsebased}.

Thus, due to the extensive literature on the topic, we focus on the quantum-classical algorithms to solve the \textit{Graph Coloring} (\textit{GC}) problem.

The \textit{GC} problem admits a quadratic unconstrained binary optimization (QUBO), so it is possible to solve the corresponding combinatorial optimization problem through quantum annealing. For instance, C. Silva et al.\cite{silva2020mapping} compared the results of solving the QUBO problem of \textit{GC} both with a fully-classical simulated annealing method and using a \textit{D-Wave} quantum machine; according to their study, the quantum approach found more solutions.
The paper \cite{wikeckowski2019disorder} proposes a similar approach (they still exploit the mapping of the combinatorial optimization problem into chimera graphs and use quantum annealers) but focus more on the effect of an external noise source, which is a hot topic for noisy intermediate-scale quantum devices. In particular, they manage to exploit the noise-generated static disorder to improve performance. 
Quantum annealing methods for \textit{GC} are also discussed in \cite{titiloye2012parameter, kudo2018constrained, titiloye2011quantum}, but in this case, QUBO formulation is not adopted, constraints and objective function are directly encoded into the problem and driving Hamiltonians.

Quantum circuits for \textit{GC} problems are the topic of \cite{do2020planning}. In this work, M. Do et al. extend a previous study on Max-Cut problems \cite{mohar1990eigenvalues}: they implement the Quantum Alternating Operator Ansatz (QAOA) for \textit{GC} on quantum processor architecture. Together with the circuit routing, they propose a qubit initialization approach that allows for shorter makespan compilation.
In \cite{tabi2020quantum}, the authors designed a space-efficient quantum optimization algorithm for the \textit{GC} problem circuit implementation. They approached the problem through a gate-based implementation and showed that gain in the number of logical qubits, which are exponentially reduced in the number of colors, comes to a cost of deeper circuits.

The paper \cite{wang2011improved} presents a generalization of the Grover algorithm to operate on ternary quantum circuits: the augmented representation capability enhanced by the use of \textit{qudits} instead of \textit{qubits} allows the reduction of the circuit complexity, the simplification of the experimental setup and the enhancement of the algorithm efficiency.
Multi-level computational units, specifically \textit{qutrits}, are also present in the work of S. Bravyi et al. \cite{Bravyi2022hybridquantum}. In this case, increasing the state space enables a $3$- coloring formulation for \textit{GC} problems.

Finally, the authors in \cite{ardelean2022graph} introduce quantum circuits implementing genetic algorithms (GA) to solve NP-hard optimization problems. They apply their methodology to \textit{GC} problems and examine the results obtained with a \textit{Qiskit} simulation environment.

\section{Methodology}

In this section, we illustrate two hybrid quantum-classical algorithms. The first one, \textbf{Greedy-it-MIS}, is a quantum-enhanced version of a classical heuristic for graph coloring problems; the other one, \textbf{BBQ-mIS}, is a novel approach that exploits \textit{MIS} solution to obtain a feasible graph coloring, still minimizing the number of colors used, thanks to a Branch\&Bound approach.

Before diving into the descriptions of \textbf{Greedy-it-MIS} and \textbf{BBQ-mIS} algorithms, we introduce some basic notions of graph theory, thus setting also the notation.

Given a graph $\mathcal{G}=(\mathcal{V},\mathcal{E})$, $\mathcal{V}$ is the set of \textit{vertexes}, $n=|\mathcal{V}|$, $\mathcal{E}$ is the set of undirected \textit{edges}, and $A$ the adjacency matrix of $\mathcal{G}$. A feasible \textit{coloring} of $\mathcal{G}$ consists of assigning to each vertex in $\mathcal{V}$ a color such that vertexes that share an edge have different colors. A \textit{graph coloring} (\textit{GC}) problem \cite{jensen2011graph} arises when the feasible coloring of $\mathcal{G}$ is targeted along with the minimization of colors. The number of colors that solve a \textit{GC} problem is called the \textit{chromatic number} of $\mathcal{G}$, and it is denoted by $\chi(\mathcal{G})$.
Another well-known graph combinatorial optimization problem is the \textit{Maximum Independent Set} (\textit{MIS}) problem. It consists in finding the largest independent set \cite{tarjan1977finding} of $\mathcal{V}$ (see def. \ref{def:is}).

\begin{definition}[Independent set]\label{def:is}
Given a graph $\mathcal{G}=(\mathcal{V},\mathcal{E})$, an \textit{independent set} is a set of vertexes such that no two vertexes share an edge in $\mathcal{E}$.
\end{definition}

\subsection{Greedy-it-MIS for Graph Coloring problems}

A first approach to solving \textit{GC} problems, without requiring one optimization variable, \textit{i.e.} a logical qubit, for each possible vertex color assignment \cite{glover2018tutorial}, and exploiting the straightforward solution of \textit{MIS} problems on neutral atoms, follows from a simple argument. An \textit{MIS} solution is a feasible color assignment: provided that all vertexes in the \textit{MIS} are independent, they can be colored with the same color. Then, the procedure can iterate over the induced graph $\mathcal{G}'=(\mathcal{V}\setminus MIS, \mathcal{E}')$ that is the original graph $\mathcal{G}$ after removing the vertexes contained in the \textit{MIS} and the corresponding edges. This procedure iterates until all vertexes in $\mathcal{V}$ are associated with a color, as shown in algorithm \ref{alg:greedy}. We named this coloring strategy \textbf{Greedy-it-MIS} since we consider the first solution coming from the \textit{MIS solver}, after checking for its independence, to obtain $\mathcal{G}'$, without evaluating the impact of choosing that particular solution on the overall coloring. In a typical short-sighted way, \textbf{Greedy-it-MIS} looks just one step at a time, hence the name \textit{Greedy}. As a result, \textbf{Greedy-it-MIS} provides feasible \textit{GC} solutions but does not target the minimization of the colors.

\begin{algorithm}
\caption{\textbf{Greedy-it-MIS} for Graph Coloring problems}\label{alg:greedy}
\begin{algorithmic}
\Ensure $C$ is a feasible coloring for graph $\mathcal{G}=(\mathcal{V},\mathcal{E})$
\State $k \gets 0$ \Comment{number of colors used}
\While{$|\mathcal{V}|>0$}
\If{$|\mathcal{E}|>0$}  \Comment{$\mathcal{V}$ is not a MIS}
    \State $MIS \gets MIS\_solver(\mathcal{G}) $
    \State $\mathcal{G}=(\mathcal{V},\mathcal{E}) \gets \mathcal{G}'=(\mathcal{V}\setminus MIS, \mathcal{E}')$
\Else
    \State $MIS \gets \mathcal{V}$
    \State $\mathcal{G}=(\mathcal{V},\mathcal{E}) \gets \emptyset$
\EndIf
\State $k \gets k+1$
\State $C_k \gets MIS$
\EndWhile
\end{algorithmic}
\end{algorithm}

\subsection{BBQ-mIS for Graph Coloring problems}

Willing to improve the solutions to GC problems, one should identify some exploitable directions that \textbf{Greedy-it-MIS} does not take into account.

First of all, the probabilistic behavior of quantum systems requires multiple measurements to provide an \textit{MIS} solution. These repeated experiments return a histogram of possible solutions, where the most occurring one is selected as the \textit{MIS} solution by the \textbf{Greedy-it-MIS}. However, other solutions in the histogram can yield valuable information, and they come for free by the standard measurement procedure. As a matter of fact, there is no guarantee that the solution to an \textit{MIS} problem is unique, so even when the optimal coloring contains a maximum independent set, it might not be the first one given by the \textit{MIS} solver. Moreover, it happens that optimal colorings do not include \textit{MIS} solutions 
so, we should also consider smaller independent sets from the histogram of solutions.


Nevertheless, it would be computationally hard to explore by brute-force all possible independent set combinations provided by an \textit{MIS} solver. It is necessary to have a way of cutting out some solutions and evaluating the impact on the targeted GC problem of choosing one independent set over another one. An essential building block to deal with this issue is Theorem \ref{th:mIS} \cite{de_Lima_Carmo_2018, christofides1971algorithm}.

\begin{definition}[Maximal independent set]
 A \textit{maximal Independent Set} (\textit{mIS}) of $\mathcal{G}=(\mathcal{V},\mathcal{E})$ is an independent set, which is not properly contained in another independent set of $\mathcal{G}$, \textit{i.e.}, adding one or more vertexes to a maximal independent set would result in losing its independence.
\end{definition}

\begin{theorem}[Optimal graph coloring]\label{th:mIS}
Every graph $\mathcal{G}=(\mathcal{V},\mathcal{E})$ has an optimal coloring in which (at least) one of the colors is a \textit{maximal independent set}.
\end{theorem}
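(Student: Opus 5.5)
The plan is to start from an arbitrary optimal coloring and modify it greedily, keeping both feasibility and the number of colors unchanged, until one color class is a maximal independent set. Let $C_1,\dots,C_k$ be the color classes of an optimal coloring of $\mathcal{G}$, with $k=\chi(\mathcal{G})$. Each $C_i$ is an independent set in the sense of Definition~\ref{def:is}, since vertexes sharing an edge receive different colors. Focus on the single class $C_1$; we only need \emph{one} class to be maximal, so the other classes play no role beyond absorbing the changes.

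The key step is an augmentation procedure. If $C_1$ is already a maximal independent set, we are done. Otherwise there is a vertex $v\notin C_1$ such that $C_1\cup\{v\}$ is still independent, so $v$ has no neighbour in $C_1$. Say $v\in C_j$ with $j\neq 1$. We recolor $v$ with color $1$, setting $C_1\gets C_1\cup\{v\}$ and $C_j\gets C_j\setminus\{v\}$. We then check two things.
\begin{itemize}
\item The new $C_1$ is independent by the choice of $v$.
\item The new $C_j$ is a subset of an independent set, hence independent.
\end{itemize}
So the coloring stays feasible. The number of nonempty classes does not increase. It also cannot decrease, because $k=\chi(\mathcal{G})$ is minimal: if $C_j$ became empty, we would have a $(k-1)$-coloring, a contradiction. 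In either case we still have an optimal coloring with exactly $\chi(\mathcal{G})$ colors, or we would reach a contradiction; the argument only needs "at most $k$ colors, hence exactly $k$."

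Termination follows because $|C_1|$ strictly increases at each step and is bounded by $n=|\mathcal{V}|$. So after finitely many recolorings, no vertex can be added to $C_1$, which means $C_1$ is not properly contained in any independent set. That is exactly the definition of a maximal independent set. An equivalent one-shot phrasing, which I would mention as an alternative, is this: among all optimal colorings, choose one maximizing $|C_1|$, and the augmentation step shows that such a $C_1$ must be maximal.

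I do not expect a genuine obstacle. The only point needing care is the bookkeeping of the second step: we must argue that removing $v$ from $C_j$ never breaks feasibility, and that a possibly emptied class does not invalidate optimality. Both are handled by the observation that subsets of independent sets are independent, together with the minimality of $\chi(\mathcal{G})$.
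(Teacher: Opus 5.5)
Your proof is correct and is essentially the paper's argument: the paper extends $C_1$ in one step to a maximal independent set $I\supseteq C_1$ and takes $\{I, C_2\setminus I,\ldots,C_k\setminus I\}$, which is exactly the coloring your vertex-by-vertex augmentation produces. Your explicit observation that no class can become empty, by minimality of $\chi(\mathcal{G})$, fills in a detail the paper leaves implicit.
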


\textit{Proof}:
$C=\{ C_1, \ldots , C_k \}$ vertex sets defining the optimal coloring of $\mathcal{G}=(\mathcal{V},\mathcal{E})$, $k$ colors. Let $I$ be an \textit{mIS} of $\mathcal{G}$ s.t. $I$ contains $C_1$ then $C' = \{ I, C_2\setminus I,\ldots, C_k\setminus I \}$ is still an optimal coloring.

Thanks to this theorem, we can restrict our solutions space investigation to \textit{mIS}s.

At this point, we can set up an effective minimization of colors for a \textit{GC} problem. The proposed solution is a Branch\&Bound (BB) approach that branches on \textit{mIS} solutions and considers as bounds the lower bounds on the chromatic number: \textbf{BBQ-mIS}.

More precisely, \textbf{BBQ-mIS} leverages well-known lower bounds on $\chi(\mathcal{G})$ for the pruning criteria, since the \textit{GC} is a minimization problem, and exploits upper bounds on $\chi(\mathcal{G})$ for the exploration policy of the BB tree. The considered bounds are valid for any graph $\mathcal{G}=(\mathcal{V},\mathcal{E})$.

We denote with $\Delta(\mathcal{G})$ the maximum vertex degree in $\mathcal{G}$ and with $d_i$ the degree of the vertex $i \in \mathcal{V}$, hence the following upper bounds (UB).

\begin{itemize}
    \item Greedy coloring UB\cite{mitchem1976various}: $\chi(\mathcal{G}) \leq \Delta(\mathcal{G})+1 = UB_G$
    \item Welsh-Powell's UB\cite{ub_ch_num}: $\chi(\mathcal{G}) \leq \max\limits_{i \in \mathcal{V}} (min(d_i+1, i)) = UB_{WP}$, provided that $d_1 \geq d_2 \geq \cdots \geq d_n$ 
\end{itemize}

Let $\lambda_1, \ldots, \lambda_n$ be the eigenvalues of $A$, with $\lambda_1\geq\lambda_2\geq \cdots \geq \lambda_n$, and $(n^+,n^0,n^-)$ the inertia of $A$, \textit{i.e.}, the number of positive, null and negative eigenvalues respectively, then the following lower bounds (LB) hold.

\begin{itemize}
    \item Hoffman's LB\cite{hoffman}: $\chi(\mathcal{G}) \geq 1-(\lambda_1/\lambda_n) = LB_H$ 
    \item Elphick-Wocjan's LB\cite{elphick2016inertial}: $\chi(\mathcal{G}) \geq 1 + max(\frac{n^+}{n^-}, \frac{n^-}{n^+})=LB_{EW} $ 
    \item Edwards-Elphick's LB\cite{edwards1983lower}: $\chi(\mathcal{G}) \geq n/(n-\lambda_1)=LB_{EE}$
\end{itemize}

Putting together these concepts, \textbf{BBQ-mIS} represent an outer (with respect to the \textit{MIS} solver) optimization phase, which considers all solutions contained in the histogram coming out from the \textit{MIS} solver and branches over the induced graphs, obtained by removing all the vertexes in one of the parent node's \textit{mIS}s.

At the root of the BB tree, we start with the graph $\mathcal{G}$ to be colored. The BB scheme, as represented in Fig. \ref{fig:bb}, governs the optimization so that at the end of the procedure we can retrieve as the \textbf{BBQ-mIS} solution the best coloring found so far, \textit{i.e.}, the one that requires fewer colors.

At each BB node, the \textit{GC} problem is associated with a state (see Fig. \ref{fig:bb_node}), fully described by a tuple $(\mathcal{H},C,k,LB,UB)$ where
\begin{itemize}
    \item $\mathcal{H}$ is the induced subgraph obtained from the one in the parent node's state by removing the \textit{mIS} solution from which the branch springs up. $\mathcal{H}$ will be the input graph to the \textit{MIS} solver for the considered BB node off-springs;
    \item $C$ is a feasible coloring; if the BB node is at depth $l$ ($l=0$ at the BB root) of the BB tree and $\mathcal{H}=\mathcal{H}(\mathcal{U},\mathcal{I})$, $m=|\mathcal{U}|$, $C=\{C_1=mIS_1,C_2=mIS_2,\ldots C_l=mIS_l,C_{l+1}=u_1,\ldots,C_{l+m}=u_m\}$, $u_i \in \mathcal{U}$, \textit{i.e.}, the first $l$ colors are the ones inherited from the parents' \textit{mIS}s and the other $m$ colors are assigned according to the worst-case scenario (one color for each of the remaining vertexes in $\mathcal{H}$);
    \item $k$ is the value of the objective function in the BB node, \textit{i.e.}, the number of colors in $C$;
    \item $LB$ is a lower bound on $\chi(\mathcal{G})$ by pursuing the exploration of the BB node, \textit{i.e.}, for a node at depth $l$, $LB=l+$ lower bound on $\chi(\mathcal{H})$; this last lower bound is selected as the tightest among Hoffman, Elphick-Wocjan and Edwards-Elphick lower bounds, that is the $max(\lfloor LB_H\rfloor, \lfloor LB_{EW}\rfloor, \lfloor LB_{EE}\rfloor)$;
    \item $UB$ is an upper bound on $\chi(\mathcal{H})$, $UB = min(\lceil UB_G\rceil, \lceil UB_{WP}\rceil)$
\end{itemize}

Not all the solutions provided by the \textit{MIS} solver give birth to branches. In fact, the traditional pruning criteria of the BB method are complemented by some others which arise from the problem at hand. The effect of the following pruning criteria (see Fig. \ref{fig:bb_prune}) significantly reduce the number of BB nodes that are generated or explored.

\begin{itemize}
    \item \underline{Pruning by non-improving solution}: it applies when in the state of a BB node $LB \geq$ best value of the objective function found so far; this is a standard pruning criterion, which prevents us from wasting time and resources by exploring the potential children of a BB node which for sure would not provide a coloring with a smaller number of colors.
    \item \underline{Pruning by unfeasibility}: it avoids generating children from the \textit{MIS} solver which are either non-independent sets (they would result in some adjacent vertexes colored with the same color) or non-maximal sets (they would violate Th. \ref{th:mIS} on which the overall BB approach relies).
    \item \underline{Pruning by redundancy}: it avoids duplicating an exploration in the BB tree; if the same subgraph $\mathcal{H}$ has been generated previously, it will not generate a new BB node. Since the same induced graph can be obtained by removing the same vertexes, but in a different order, we associate with each $\mathcal{H}(\mathcal{U},\mathcal{I})$ a fingerprint, \textit{i.e.}, $fp(\mathcal{H})=\sum_{u \in \mathcal{U}}2^i-1$, and generate a new BB node only if its fingerprint has not yet been detected.
\end{itemize}

\begin{figure}[ht]%
    \centering
    \subfloat[\centering Each node in the Branch\&Bound tree is associated with the induced graph $\mathcal{H}$, the corresponding coloring $C$, the number of colors  $k$, a lower bound $LB$ on the best coloring achievable by branching further on that node, and an upper bound $UB$ on $\chi(\mathcal{H})$.]{{\includegraphics[width=\columnwidth]{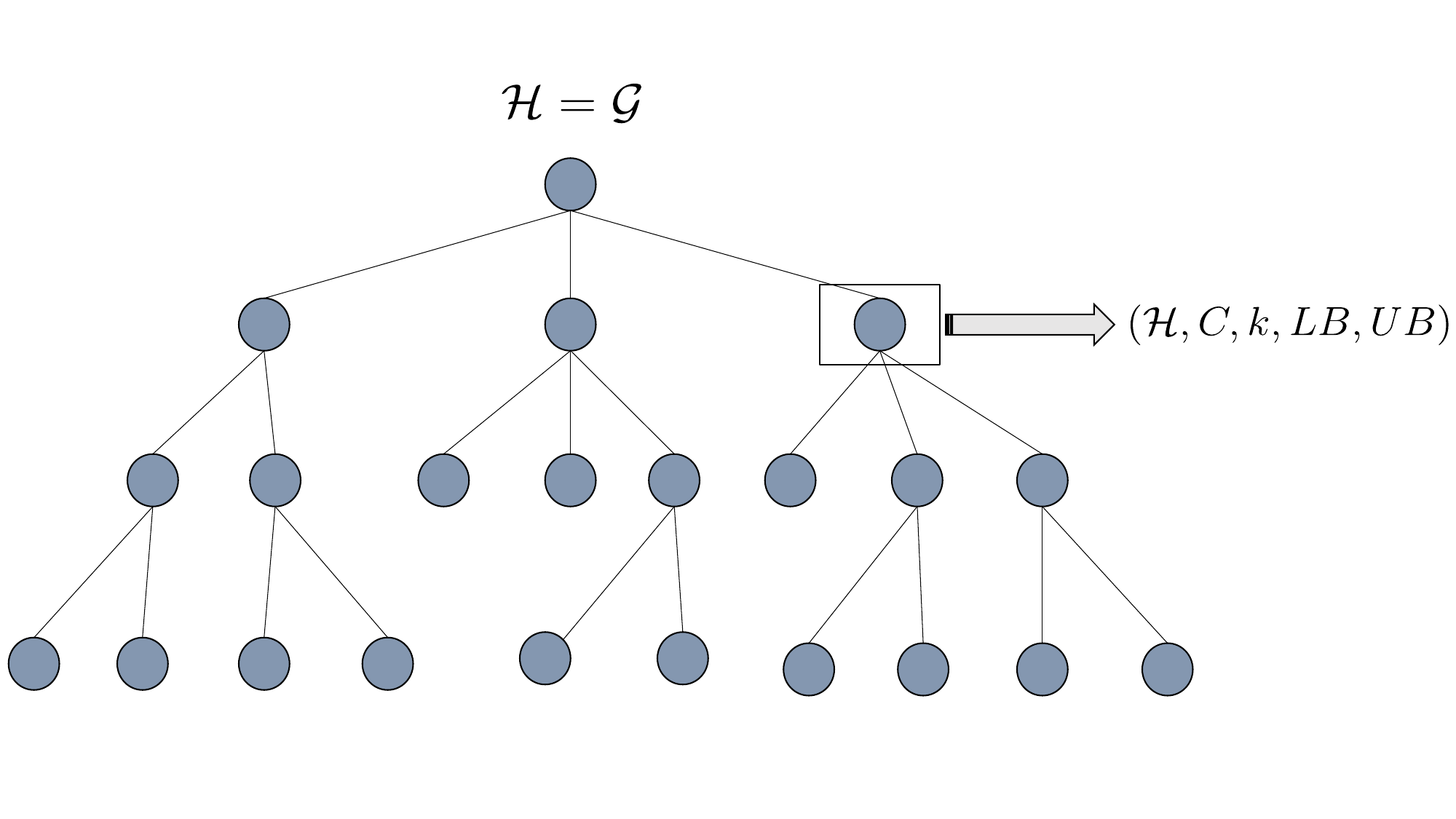}  \label{fig:bb_node}}}\\
    \subfloat[\centering The pruning criteria reduce the number of Branch\&Bound nodes to explore taking into account non-improving directions, unfeasibility, and equivalences of graphs $\mathcal{H}$.]{{\includegraphics[width=\columnwidth]{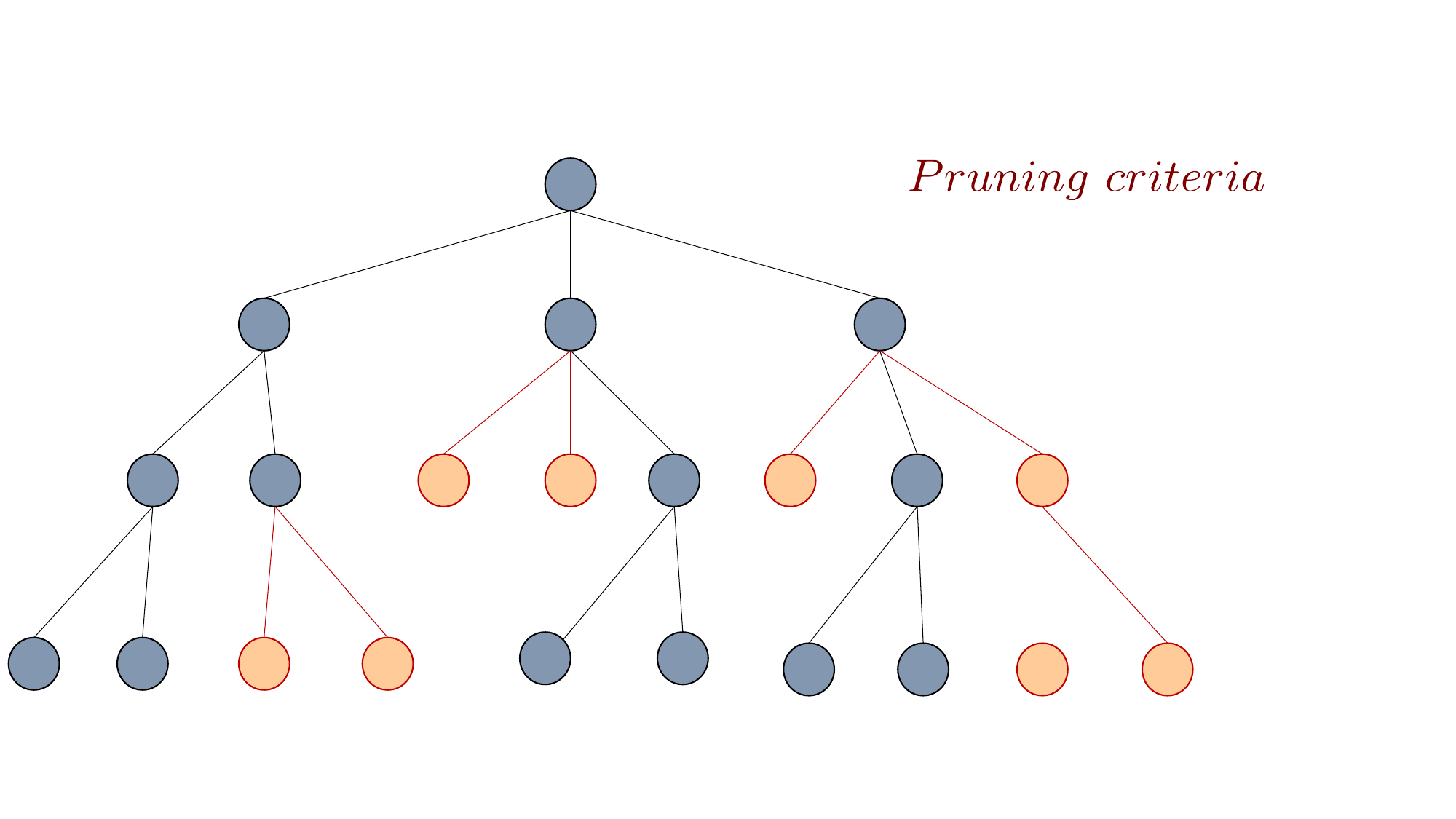}
    \label{fig:bb_prune}}}\\
    \subfloat[\centering A priority score is assigned to each Branch\&Bound node to lead the exploration of the Branch\&Bound tree: higher priority scores mean that the corresponding node is explored sooner.]{{\includegraphics[width=\columnwidth]{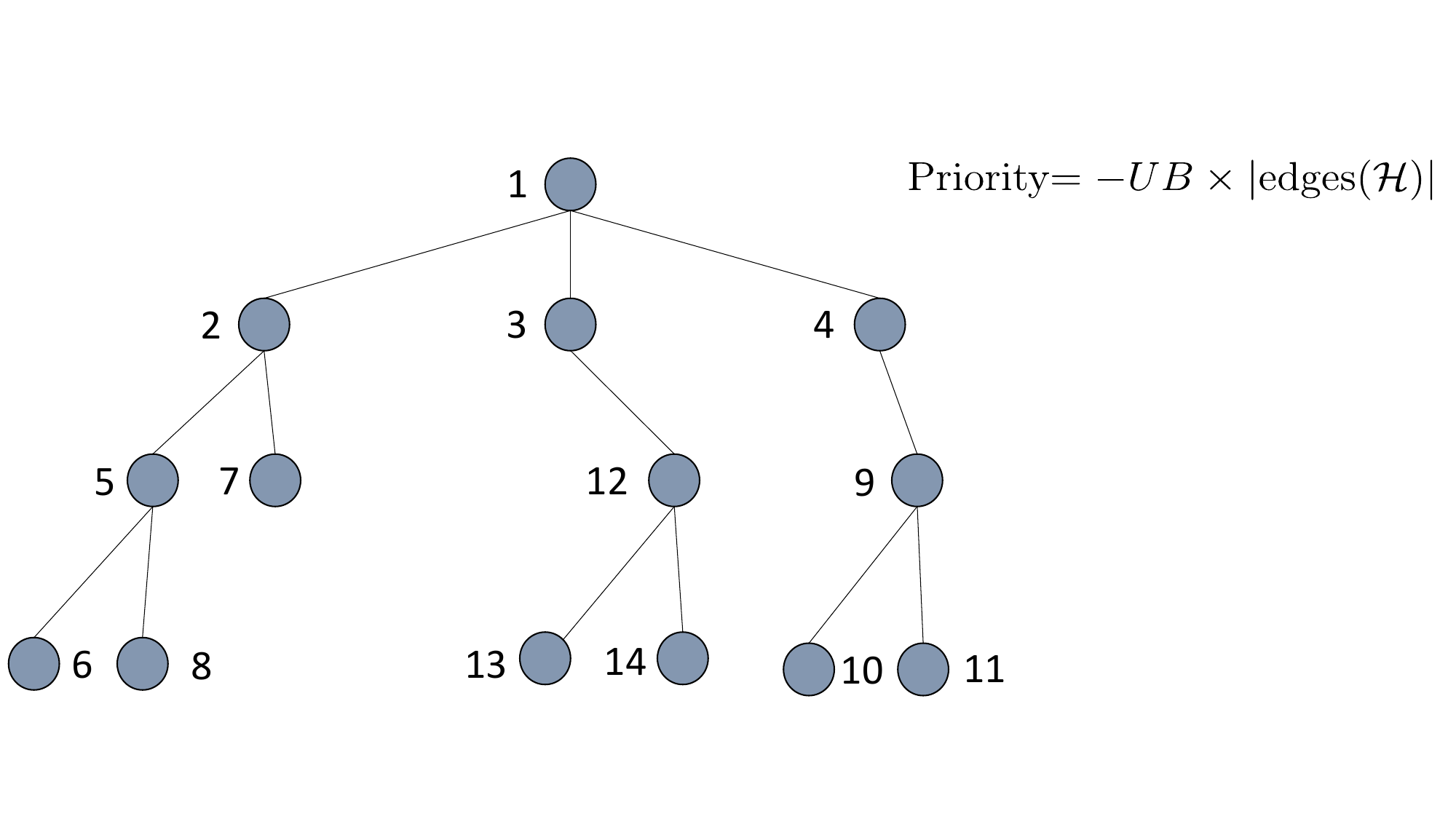}
    \label{fig:bb_exp}}}%
    \caption{Representation of the Branch\&Bound scheme underlying the \textbf{BBQ-mIS} algorithm.}%
    \label{fig:bb}
\end{figure}

Concerning the BB tree exploration, we first investigated standard policies, such as First-In-First-Out (FIFO), gap-based, and depth-first, then we designed our custom policy that assigns to each BB node a priority score (see Fig.\ref{fig:bb_exp}). The higher the priority score, the sooner the BB node is explored. To balance between the in-depth and in-width exploration, the priority score is computed as $-UB \times |\mathcal{I}|$. At the top of the BB tree, the priority scores are low. So, in the beginning, the BB exploration goes in-depth, thus providing a best objective function value similar to the one that can be obtained by the \textit{Greedy-it-MIS} algorithm. Then, it starts looking for the most promising directions in the BB tree, the ones with lower connectivity $|\mathcal{I}|$, combined with a tighter bound on the worst-case scenario provided by $UB$.

Finally, \textbf{BBQ-mIS}, as described so far, is designed to end with its solution once all the BB branches culminate into leaves. However, since the number of \textit{mIS}s scales linearly with the size of the graph $n$ \cite{byskov2004enumerating} and \textit{mIS}s are computed at each BB node, \textbf{BBQ-mIS} could explore a very large number of solutions before ending. Hence, we added a stopping criteria, after which the \textbf{BBQ-mIS} returns the best coloring found as its solution. In particular, we set a maximum number of BB nodes that may be explored to $50$.

\subsection{Quantum and classical solvers}\label{sec:mis_solver}

So far, we have focused mainly on the classical part of our hybrid algorithms, \textbf{Greedy-it-MIS} and \textbf{BBQ-mIS}. Here, we describe what is the \textit{MIS} solver that brings quantum into the aforementioned methods.

In our experiments, we relied on \textit{Pulser}\footnote{https://pulser.readthedocs.io} for the emulation of the quantum system on classical resources. Then, we adopted the Quantum Approximate Optimization Algorithm (QAOA) \cite{serret2020solving, pasqal_optim, farhi2014quantum} to optimize the laser pulses' parameters to solve an \textit{MIS} problem. Specifically, the input graph $\mathcal{G}$ is depicted on the array of rearranged Rydbger atoms. The rabi frequency $\Omega$ is set according to the desired rydberg radius $r_b$ so that the edges in $\mathcal{G}$ reproduce the proper connectivity. The detuning and the shape of the laser pulses are set according to the specifics of the real hardware \textit{Pasqal’s R\&D prototype Chadoq2} \cite{silverioPulserOpensourcePackage2022}. The maximum value for rabi frequency and detuning is $12MHz$, and the duration of the pulse sequences is limited to $3 \mu s$.

The parameters to optimize, \textit{i.e.}, the variational parameters in QAOA, are the time durations ($\mu s$) of the pulses. In this work, we considered just one layer of alternating non-commutative Hamiltonian. The durations' optimization relies on a classical solver (Nelder-Mead simplex algorithm\footnote{https://docs.scipy.org/}) that tunes them in order to minimize the energy of the Hamiltonian, so pushing the results towards \textit{MIS}s solutions.
Finally, thanks to the Rydberg blockade effect and the tuned laser pulses, we take repeated measurements and obtain a set of viable \textit{MIS}s.

This emulative process can be adopted for graphs with a limited number of vertexes, as it requires solving the differential equation describing the overall quantum system. Practically, it is possible to emulate quantum \textit{MIS} solution on graphs with up to $15-17$ vertexes within $3-20$ hours of emulation.


\textit{Gurobi} classical solver provides a reference value for the \textit{GC} problem solutions. Since the GC problems for which it was possible to emulate the quantum dynamics were pretty small, up to $15$ vertexes, \textit{Gurobi} solver always return an optimal coloring solution.

\subsection{Remarks about HPC-QC interplay}
The application described in this paper represents a good example of a hybrid algorithm for solving combinatorial optimization problems which can leverage both the computational power of modern HPC systems and the characteristics of some types of QPUs, in this case, neutral atoms-based ones. In fact, the BB is well-suited for parallel computing (our simple implementation already makes use of MPI to launch processes on multiple HPC processes) as each BB node is independent of the others, leading to batches of parallel computation where a set of the BB nodes to be explored is evaluated in parallel. In this context, the HPC produces a large number of quantum-friendly problems that can be solved using a QPU, inducing a few technical challenges related to the interplay between quantum computers and classical computers. Given the early stage of the NISQ platforms, a similar type of usage from typical users (mostly academic and research personnel) is foreseeable \cite{humbleQuantumComputersHighPerformance2021}.

The main challenge is related to parallelism: in the short term, most HPC facilities are expected to have at most one QPU for a given technology (\textit{e.g.}, neutral atoms) available: this represents a first bottleneck for this class of problems where many quantum sub-problems are generated; key parameters, in this case, are related to the relative length of the quantum and classical task and to the latency of moving data around. In this specific case, the quantum task requires 50 \emph{shots}, that are executed by the QPU at a rate that we can assume to be $\sim 5Hz$ \cite{wurtzAquilaQuEra256qubit2023, QAOAQAASolve2023}. This means that sampling an \textit{MIS} requires around $10s$, regardless of the size of the graph; also, \textit{MIS}s can only be addressed one at a time, or a few at a time if considering approaches like mapping multiple graphs on the same QPU register. This granularity is much larger than the time taken by the classical computer to generate problems, creating a bottleneck and wasting all of the advantages of parallelizing the BB tree exploration. On the other hand, the impact of data movement is negligible at this granularity.

The shot rate is then a critical factor to keep into account for an efficient HPC-QC integration. Another related aspect is the scheduling of resources: a low shot rate keeps classical resources idle for a long while waiting for the QPU to complete its task, resulting in a sub-optimal allocation. Possibilities to mitigate this issue range from oversubscribing classical resources when hybrid jobs are involved, to splitting the calculation into multiple steps (\textit{i.e.}, expressing it as a workflow), decoupling the different classical and quantum steps into separate units of scheduling, allowing the batch scheduler to allocate resources to them only when they are actually ready to run. This last approach is particularly suitable also when dealing with cloud-based quantum resources, or with mixed set-ups.

Finally, assuming a different technology (\textit{e.g.}, superconducting) or a dramatically improved shot rate, other resource scheduling challenges should be considered: with only a few QPUs available, we can imagine having a large number of small quantum tasks to be submitted by multiple classical jobs executed by many users. Efficient task queues should be implemented to handle these tasks at a much smaller granularity \cite{schulzAcceleratingHPCQuantum2022a}.

\section{Results and Discussion}

\begin{figure*}[htbp]
    \includegraphics[width=\linewidth]{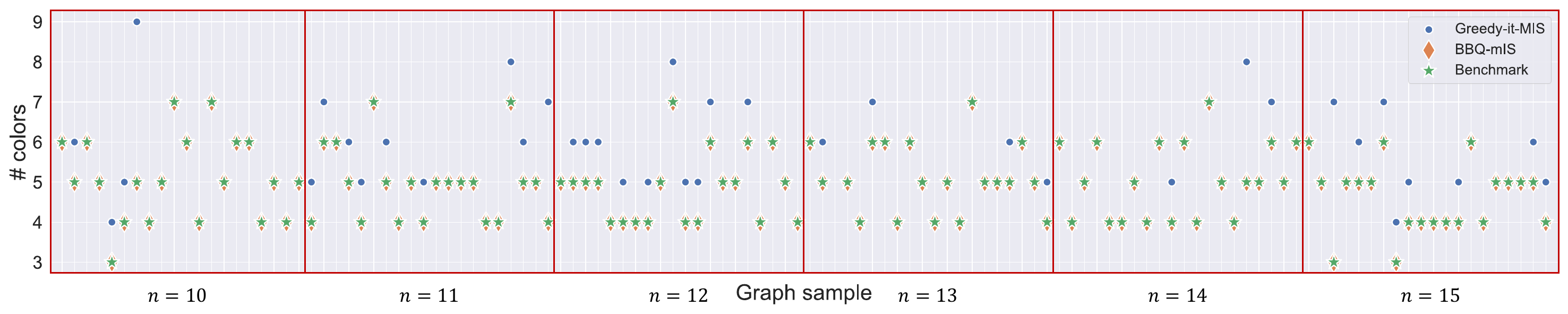}
    \caption{Comparison of \textit{GC} results: \textbf{Greedy-it-MIS} and \textbf{BBQ-mIS} algorithms are used to color all graphs in our dataset, \textit{Gurobi} solver provides a benchmark for the coloring solution. The implementation for the \textit{MIS} solver is based on the \textit{Pulser} library.}\label{fig:compQ}
\end{figure*}

To test our algorithms, we created a dataset containing $120$ samples of unit-disk graphs with a number of vertexes $n \in \{10,11,12,13,14,15\}$. Specifically, there are $20$ samples for each possible value of $n$. The unit-disk feasible representation of each graph is obtained with the \textit{GEAN model} presented in \cite{embedding_links}.

All the experiments were run on an IBM Power9-based cluster, with 32 cores/node and 256 GB/node.  In particular, each trial uses 32 physical cores of a node and used up to 16 MPI processes for the parallelization of the Branch\&Bound. The overall experiment required $\approx$ 8000 core hours.

When solving \textit{GC} problems, the \textit{MIS} solver emulates the quantum system, thanks to the \textit{Pulser} library implementation, as discussed in section \ref{sec:mis_solver}.

On this dataset, the \textbf{BBQ-mIS} algorithm always reaches the coloring solutions with the same number of colors as the optimal \textit{GC} solutions. \textbf{Greedy-it-MIS} instead provides worse solutions for $38$ samples out of $120$; in the worst case, it requires up to $4$ colors more ($9$ colors instead of $5$). Fig. \ref{fig:compQ} summarizes all \textit{GC} results on our dataset.

Another observation concerns the number of nodes used in the Branch\&Bound exploration. In the case of our dataset, the maximum number of BB nodes, \textit{i.e.}, $50$, is never reached: \textbf{BBQ-mIS} terminates into leaves after having explored at most $20$ nodes, in the worst-case graph instances. Reasonably, the number of \textit{mIS} coming from small graphs, hence the number of children generating from each branch, is significantly lower than when targeting larger graphs.

When moving to larger graphs and simulations with real quantum hardware, it might be interesting to modify the computation of the feasible coloring $C$ with a less greedy approach than the one which assigns one different color to each of the remaining vertexes in $\mathcal{H}$. Thus, one could identify a property on the number of vertexes $|\mathcal{U}|$ to establish that when the graph is smaller than a given threshold it is better to solve the subgraph coloring with exact classical methods, possibly even enhanced by HPC, whereas for graph larger than the threshold, the quantum \textit{MIS} solver can be exploited to provide viable \textit{mIS} for partial coloring in a computationally efficient way.

Finally, given the current version of \textbf{BBQ-mIS}, and the sampling rate of the quantum machine, we estimated the time needed to solve a GC, independently of the graph size. To provide meaningful statistics, we need $50$ samples for a fixed pulse setting. This sampling procedure is a minimal quantum task. To optimize the pulse durations, we allow for evaluating the sampling results $100$ times. Once the optimizer finds the best duration values, we retrieve $100$ samples to provide the final result of the \textit{MIS} solver. This procedure takes place each time a BB node is explored. Since we limited our optimization to $50$ BB nodes in the worst-case scenario, we would need $(50\times100+100)\times 50=255000$ samples to solve a GC problem instance. Considering a $5Hz$ sampling rate, the corresponding computational time is $\approx 14$ hours. For the specific graph instances of our dataset, this time would be lower because the BB ended by optimality before exploring $50$ nodes. In particular, all the corresponding GC problems were solved with $8-20$ BB nodes, which would correspond to $2-6$ hours on the real QPU.
\section{Conclusion}

The work presented so far shows promising results for Graph Coloring problem solutions. Nevertheless, some further directions can be explored to improve \textbf{BBQ-mIS}'s performances or provide a broader benchmark.

One possibility is considering the Quantum Adiabatic Algorithm~\cite{albash2018adiabatic} to solve the \textit{Maximum Independent Set} problem instead of the QAOA approach, thus also reducing the use of quantum resources because no pulse optimization phase would be needed.

On the emulation side, it may be interesting to implement the same algorithms exploiting other libraries that emulate the neutral atoms quantum system, such as \textit{Bloqade}\footnote{https://queracomputing.github.io/Bloqade.jl}, potentially being able to target even larger graphs. Concerning this aspect, we aim to compare the emulated results with the simulations on real quantum hardware, thus establishing the robustness of our methods in the presence of noise.

Finally, even though these hybrid quantum-classical algorithms target \textit{GC} problems, they can inspire methods with a main focus on saving the required qubits and algorithmic integration with HPC systems, for solving other combinatorial optimization problems. 

\section*{Acknowledgment}
We would like to thank the CINECA consortium for the support and the provisioning of computational resources on the Marconi100 cluster.


\bibliographystyle{IEEEtran}
\bibliography{bbq.bib}

\end{document}